\newif\ifFull
\let\doendproof\endproof
\renewcommand\endproof{~\hfill\qed\doendproof}
\newtheorem{observation}[theorem]{Observation}
\begin{document}

\title{%
Wear Minimization for Cuckoo Hashing:
How~Not~to~Throw~a~Lot~of~Eggs~into~One~Basket}

\author{David Eppstein\inst{1} \and Michael T. Goodrich\inst{1}
\and Michael Mitzenmacher\inst{2} \and Pawe\l{} Pszona\inst{1}}

\institute{
Department of Computer Science,
University of California, Irvine
\and
School of Engineering and Applied Sciences,
Harvard University
}
\date{}

\maketitle

\begin{abstract}
We study wear-leveling techniques for cuckoo hashing, showing
that it is possible to achieve a memory wear
bound of $\log\log n+O(1)$ after the insertion of $n$ items
into a table of size $Cn$ for a suitable constant $C$ using cuckoo hashing.
Moreover, we study 
our cuckoo hashing method empirically, showing that it significantly
improves on the memory wear performance for classic cuckoo hashing 
and linear probing in practice.
\end{abstract}


\begin{quote}
``I did throw a lot of eggs into one basket, as you do in 
your teenage years...''\\[-2ex]

~\hfill---Dylan Moran~\cite{moran}
\end{quote}

\section{Introduction}

A \emph{dictionary}, or \emph{lookup table}, 
is a data structure for maintaining a set, $S$,
of key-value pairs, which are often called \emph{items}, to
support the following operations:
\begin{itemize}
  \item {\bf add}$(k,v)$: insert the item $(k,v)$ into $S$.
  \item {\bf remove}$(k)$: remove the item with key $k$ from $S$.
  \item {\bf lookup}$(k)$: return the value $v$ associated with $k$ in $S$ 
(or return \textbf{null} if there is no item with key equal to $k$ in $S$).
\end{itemize}
The best dictionaries, both in theory and in practice, are implemented
using hashing techniques (e.g., see~\cite{bp-s-10,Knuth:1998}),
which typically achieve $O(1)$ expected time performance for 
each of the above operations.

One technique that has garnered much attention of late is
\emph{cuckoo hashing}~\cite{pagh2004cuckoo}, which is a hashing method
that supports lookup and remove operations in 
\emph{worst-case} $O(1)$ time, with
insertions running in expected $O(1)$ time.
Based on the metaphor of how the European cuckoo bird lays its eggs, 
inserting an item via cuckoo hashing involves throwing the item into 
one of two possible cells and evicting any existing item, which in turn
moves to its other cell, repeating the process.
In this standard formulation,
each item in $S$ is stored in one of two possible
locations in a table, $T$, but other variants have also been
studied as well
(e.g., see~\cite{DBLP:conf/icalp/ArbitmanNS09,devroye2003cuckoo,DBLP:conf/icalp/DietzfelbingerGMMPR10,DBLP:conf/approx/FriezeMM09,DBLP:journals/siamcomp/KirschMW09}). For instance, the \emph{cuckoo hashing with a stash} variation sacrifices some of the elegance of the basic algorithm by adding an auxiliary 
cache to break insertion cycles, greatly reducing the failure probability of the algorithm and reducing the need to rebuild the structure when a failure happens~\cite{DBLP:journals/siamcomp/KirschMW09}.
There have also been improvements to the space complexity for cuckoo 
hashing~\cite{Fotakis,DBLP:conf/esa/LehmanP09},
as well as experimental studies of
the practical applications of 
cuckoo hashing~\cite{Alcantara:2009:RPH:1661412.1618500,Zukowski:2006:AH:1140402.1140410}.
In spite of its elegance, efficiency, and practicality, however,
cuckoo hashing has a major drawback with respect to modern memory technologies.

\subsection{Wear Leveling in Modern Memories}
Phase-change memory (PCM) and flash memory are memory technologies
gaining in modern interest,
due to their persistence and energy efficiency properties. 
For instance, they are being
used for main memories as well as for external storage
(e.g., see~\cite{bcmv-ifm-03,pboz-fmc-97,wet-pcm-10,Wu:1994}).
Unfortunately, such modern memory technologies suffer from \emph{memory wear},
a phenomenon caused by charge trapping in the gate
oxide~\cite{Grupp:2009:CFM:1669112.1669118}, in which extensive writes
to the same cell in such memories
(typically 10,000--100,000 write/erase cycles for flash memory) 
causes all memory cells to become permanently unusable.

To deal with this drawback,
several heuristic techniques for \emph{wear leveling} have been
proposed and
studied
(e.g., see~\cite{bt-cafm-06,chang2007efficient,chang2007endurance,%
chk-ifw-10,UBIFS,Qureshi:2009,woodhouse2001jffs}).
Such techniques are intended to limit memory wear, but they do not produce 
high-probability wear guarantees; hence, we are motivated to provide
special-purpose solutions for common data structure applications, such as 
for dictionaries, that have provable wear-leveling guarantees.

In this paper, we focus on a previously-ignored aspect of cuckoo
hashing, namely its \emph{maximum wear}---the maximum number of writes
to any cell in the hash table, $T$, during the execution 
of a sequence of $n$ operations.
Unfortunately, in spite of their other nice properties, cuckoo hashing
and its variants do not have good wear-leveling properties.
For instance, after inserting $n$ items
into a hash table, $T$, implemented using cuckoo hashing,
the expected maximum wear of a cell in $T$ 
is $\Omega(\log n/\log \log n)$---this follows from
the well-known balls-in-bins result~\cite{mitzenmacher2005probability}
that throwing $n$ balls into $n$ bins results in the expected
maximum number of balls in a bin being $\Theta(\log n/\log\log n)$.
Ideally, we would strongly prefer there to be 
a simple way to implement cuckoo hashing
that causes its memory wear to be closer to $O(1)$.

We study a simple but previously unstudied modification to cuckoo hashing that uses $d > 2$ hash functions, and that chooses where to insert or reinsert each item based on the wear counts of the cells it hashes to.
We prove that this method achieves, with high probability, 
maximum wear of $\log\log n +O(1)$
after a sequence of $n$ insertions into a hash table of size $Cn$, where $C$
is a small constant. 
Moreover, we show experimentally that this variant 
achieves significantly reduced wear compared to classical cuckoo hashing
in practice, with deletions as well as insertions.

\subsection{Related Work}
With respect to previous algorithmic work on memory wear leveling,
Ben-Aroya and Toledo~\cite{bt-cafm-06}
introduce a memory-wear model
consisting of $N\ge2$ cells, indexed
from $0$ to $N-1$, such that each is a single memory word or block,
and such that there is a known parameter,
$L$, specifying an upper-bound limit on the number of 
times that any cell of this memory can be rewritten.
They study competitive ratios of 
wear-leveling heuristics from an online 
algorithms perspective.
In addition, in even earlier work,
Irani {\it et al.}~\cite{inr-otscc-92} study several schemes for performing
general and specific algorithms using write-once memories.
Qureshi {\it et al.}~\cite{Qureshi:2009}, on the other hand, describe a
wear-leveling strategy, called the \emph{Start-Gap} method, which 
works well in practice but has not yet been analyzed theoretically.
With respect to other previous work involving
wear-leveling analysis of specific algorithms and data structures, 
Chen {\it et al.}~\cite{ChenGN11}
study the wear-leveling performance of methods for 
performing database index and join operations.

Azar {\it et al.}~\cite{abku-ba-99} show that if one throws $n$
balls into $n$ bins, but with each ball being added to the least-full bin from
$k\ge 2$ random choices, 
then the largest bin will have size $\log\log n+O(1)$ with 
high probability. 
Subsequentially, other researchers have discovered further applications
of this approach, which has given rise to a paradigm
known as the \emph{power of two choices} (e.g., see~\cite{mrs-p2-01}).
This paradigm is exploited further in the work on
\emph{cuckoo hashing} by Pagh and Rodler~\cite{pagh2004cuckoo},
which, as mentioned above, uses two random hash locations for items
along with the ability to dynamically move keys to any of its hash
locations so as to support worst-case $O(1)$-time lookups and removals.
Several variants of cuckoo hashing have been considered, including
the use of a small cache (called a ``stash'') and the use of more than two
hash functions
(e.g., see~\cite{DBLP:conf/icalp/ArbitmanNS09,devroye2003cuckoo,DBLP:conf/icalp/DietzfelbingerGMMPR10,DBLP:conf/approx/FriezeMM09,DBLP:journals/siamcomp/KirschMW09}). 
Also, as mentioned above, there has also been work
improving the space complexity~\cite{Fotakis,DBLP:conf/esa/LehmanP09},
as well as experimental analyses of
cuckoo hashing~\cite{Alcantara:2009:RPH:1661412.1618500,Zukowski:2006:AH:1140402.1140410}.
Nevertheless, we are not familiar with any previous work on the 
memory-wear performance of cuckoo hashing, from either a theoretical
or experimental viewpoint.

Our approach to improving the memory wear for cuckoo hashing is to use a 
technique that could be called the \emph{power of three choices}, in that
we exploit the additional 
freedom allowed by implementing cuckoo hashing with at least three
hash functions instead of two.
Of course, as cited earlier, previous 
researchers have considered cuckoo hashing with more than two hash functions
and the balls-in-bins 
analysis of Azar {\it et al.}~\cite{abku-ba-99} applies to any number
of at least two choices.
Nevertheless,
previous many-choice cuckoo hashing schemes do not have good memory wear bounds,
since, even by the very
first random choice for each item, there is an $\Omega(\log n/\log\log n)$ 
expected maximum memory-wear bound for classic cuckoo hashing with $d\ge3$ 
hash functions.
In addition, the approach of Azar {\it et al.}~does not seem to lead to
a bound of $O(\log\log n)$
for the memory wear of the variant of cuckoo hashing we consider,
because items in a cuckoo hashing scheme can move to any of their
alternative hash locations during
insertions, rather than staying in their ``bins'' as required in the framework
of Azar {\it et al.}
Moreover, there are non-trivial dependencies that exist between 
the locations where an item is moved and a sequence of insertions that
caused those moves, which complicates any theoretical analysis.

\subsection{Our Results}
In this paper, we consider the {\em memory wear} of cells of a cuckoo hash
table, where by ``wear'' we mean the number of times a cell is
rewritten with a new value.
We introduce a new
cuckoo hashing insertion rule that essentially involves determining the next 
location to ``throw'' an item 
by considering the existing wear of 
the possible choices.
We provide a theoretical analysis proving that our simple
rule has low wear---namely, for a
suitably small constant $\alpha_d$, when inserting $\alpha n$ items
into a cuckoo hash table with $n$ cells, $d$ distinct choices per
item, and $\alpha < \alpha_d$, the maximum wear on any cell can be bounded
by $\log \log n +O(1)$.

Of course, for any reasonable values of $n$, the above $\log\log n$
term is essentially constant; hence, our theoretical analysis necessarily
begs an interesting question: 
\begin{quotation}
\noindent
\textbf{
Does our modified insertion rule for cuckoo hashing lead to improved memory-wear
performance in practice?
}
\end{quotation}
To answer this question, we performed a suite of experiements which 
show that, in fact, even for large, but non-astronomical, values of $n$,
our simple insertion rule leads to significant improvements
in the memory-wear performance of cuckoo hashing over classic cuckoo hashing
and hashing via linear probing.

\section{Algorithm}


In classical cuckoo hashing, each item of the dynamic set $S$ is hashed to two possible cells of the hash table; these may either be part of a single large table or two separate tables. When an item $x$ is inserted into the table (at the first of its two cells), it displaces any item $y$ already stored in that cell. The displaced item $y$ is reinserted at its other cell, which in turn may displace a third item $z$, and so on.  If $n$ items are inserted into a table whose number of cells is a sufficiently large multiple of $n$, then with high probability all chains of displaced items eventually terminate within $O(\log n)$ steps, producing a data structure in which each lookup operation may be performed in constant time. The expected time per insertion is $O(1)$. Removals may be performed in the trivial way, by simply removing an item from its cell and leaving all other items in their places.  By a standard analysis of balls-and-bins processes, the wear arising just from placing items into their first cells, not even counting the additional wear from re-placing displaced items, is $\Omega(\log n/\log\log n)$.  

A standard variant of cuckoo hashing allows more than two choices.  In this setting, when an item $x$ is inserted, it first goes through its choices in order to see if any are empty.  If none are, one must be displaced.  A common approach in this case is to use what is called random walk cuckoo hashing \cite{Fotakis,Frieze:2011:ARC:2078866.2078869};  when an item is to be displaced, it is chosen randomly from an item's choices.  (Usually, after the first displacement, one does not allow the item that was immediately just placed to be displaced again, as this would be wasteful in terms of moves.)  
Allowing more choices allows for higher load factors in the hash table and makes the probability of a failure much smaller.

We modify the standard algorithm, using $d\ge 3$ choices for each item, in the following ways:
\begin{itemize}
\item Associated with each cell we store a \emph{wear count} of the number of times that cell has been overwritten by our structure.
\item When an item is inserted, it is placed into one of its $d$ cells with the lowest wear count, rather than always being placed into the first of its cells.
\item When an item $y$ is displaced from a cell, we compare the wear counts of all $d$ of the cells associated with $y$ (including the one it has just been displaced from, causing its wear count to increase by one). We then store $y$ in the cell with the minimum wear count, breaking ties arbitrarily, and displace any item that might already be placed in this cell.
\end{itemize}

A computational shortcut that makes no difference to the analysis of our algorithm is to take note of a situation in which two items $x$ and $y$ are repeatedly displacing each other from the same cell; in this case, the repeated displacement will terminate when the wear count of the cell reaches a number at least as high as the next smallest wear count of one of the other cells of $x$ or $y$. Rather than explicitly performing the repeated displacement, the algorithm may calculate the wear count at which this termination will happen, and perform the whole sequence of repeated displacements in constant time.

\section{Analysis}

As is common in theoretical analyses of cuckoo hashing
(e.g., see~\cite{%
DBLP:conf/icalp/ArbitmanNS09,devroye2003cuckoo,DBLP:conf/icalp/DietzfelbingerGMMPR10,DBLP:conf/approx/FriezeMM09,DBLP:journals/siamcomp/KirschMW09,pagh2004cuckoo}),
let us view the cuckoo hash table in terms of a random
hypergraph, $H$, where there is a vertex in $H$
for each cell in the cuckoo hash table, and there is a hyperedge
in $H$
for each inserted item, with the endpoints of this hyperedge determined
by the $d$ cells associated with this item.
For the sake of this analysis, 
we consider a sequence of $\alpha n$ insertions into an initially
empty table of size, $n$, and we disallow deletions.
(Although the analysis of 
mixed sequences of insertions and deletions 
is beyond the approach of this section, we study such mixed 
update sequences experimentally later in this paper.)
Thus, $H$ has $n$ vertices and $\alpha n$ hyperedges, each of which
is a subset of $d$ vertices.
In this context, we say that a subgraph of $H$ with $s$ edges and $r$ vertices
is a \emph{tree}, if $(d-1)s=r-1$,
and it is \emph{unicyclic}, if $(d-1)s=r$
(e.g., see~\cite{KarLuc-JCAM-02}).
Since the $d$ hash functions are random, $H$ is a standard random $d$-ary 
(or ``$d$-uniform'') hypergraph, with $d\ge 3$ being a fixed constant;
hence, we may use the following
well-known fact about random hypergraphs.

\begin{lemma} [Schmidt-Pruzan and Shamir~\cite{SchSha-Comb-85}; Karo{\'n}ski and {\L}uczak~\cite{KarLuc-JCAM-02}]
\label{lem:log-components}
When $\alpha < 1/(d(d-1))$, with probability
$1 - O(1/n)$ the maximum connected component size in $H$ is $O(\log n)$, and
all components are either trees or unicyclic components.  
\end{lemma}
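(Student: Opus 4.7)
The plan is to follow the classical first-moment strategy for subcritical random hypergraphs that underlies both cited papers. Passing first to the Bernoulli model in which every $d$-subset is independently a hyperedge with probability $p \sim \alpha d!/n^{d-1}$ (standard contiguity bounds between the uniform and Bernoulli models transfer $O(1/n)$-probability conclusions back to the $\alpha n$-edge setting), the branching-process heuristic already predicts the threshold: a vertex lies in about $d\alpha$ hyperedges, and each hyperedge contributes $d-1$ fresh neighbors, so the effective mean offspring in a component exploration is $d(d-1)\alpha$. The subcriticality condition $d(d-1)\alpha < 1$ is exactly the hypothesis $\alpha < 1/(d(d-1))$.

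For the tree/unicyclic claim, I would union-bound the expected number of connected subgraphs on $r$ vertices with $s$ hyperedges whose excess $(d-1)s - r$ is at least $1$. Each such ``complex'' structure contributes at most $\binom{n}{r}\, C(r,s)\, p^s$ to the expectation, where $C(r,s)$ counts connected $d$-uniform edge sets on $r$ labelled vertices with $s$ edges. Inserting a suitable upper bound on $C(r,s)$ (the $d$-uniform analogue of the R\'enyi-style counts for connected graphs with prescribed excess), the powers of $n$ arrange so that the leading factor is $(d(d-1)\alpha)^s$ accompanied by an overall $n^{-((d-1)s - r)} \leq 1/n$. Summing over admissible pairs $(r,s)$, the geometric decay supplied by subcriticality makes the total $O(1/n)$.

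For the maximum component size, I would separately estimate the expected numbers of \emph{tree} and \emph{unicyclic} components of size exactly $r$ by the same first-moment device, invoking the $d$-uniform version of Cayley's formula to count hypertrees on $r$ labelled vertices (with $s = (r-1)/(d-1)$ hyperedges) and the corresponding count for unicyclic hypergraphs. After Stirling's approximation, the expected number of tree components of size $r$ comes out to $(d(d-1)\alpha)^s$ times polynomial factors in $r$; subcriticality makes this geometric in $r$, so for a sufficiently large constant $C = C(\alpha,d)$ the cumulative expectation over $r \geq C \log n$ is $O(1/n)$. The unicyclic case is essentially identical, with one extra polynomial factor absorbed into the exponential decay. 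A final union bound combines the two failure events.

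The hard part will be nailing down clean combinatorial upper bounds on the number of connected $d$-uniform hypergraphs on $r$ labelled vertices with prescribed excess --- the $d$-uniform generalisations of the classical Erd\H{o}s--R\'enyi-era counts. Once those estimates are in hand, the arithmetic that recovers the precise threshold $1/(d(d-1))$ is a routine union-bound calculation, which is why the statement is cited from Schmidt-Pruzan/Shamir and Karo{\'n}ski/{\L}uczak rather than redone from scratch.
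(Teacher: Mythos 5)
The paper does not prove this lemma; it imports it as a black-box citation from Schmidt-Pruzan--Shamir and Karo\'nski--{\L}uczak, which you correctly note at the end of your sketch. Your outline --- pass to the Bernoulli hypergraph with $p\sim\alpha d!/n^{d-1}$, read off the branching-process mean offspring $d(d-1)\alpha$, union-bound complex (excess $\geq 1$) connected subgraphs using the $n^{-((d-1)s-r)}\le 1/n$ factor, and bound tree/unicyclic component counts via Stirling to get geometric decay in $r$ --- is exactly the standard first-moment route those sources take, so there is nothing substantive to compare against the paper itself. One step worth tightening before you lean on it: when you estimate the expected number of tree \emph{components} of size $r$, the isolation factor $(1-p)^{\Theta(r\binom{n}{d-1})}\approx e^{-\alpha d\, r}$ must be carried along. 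Your quoted decay rate ``$(d(d-1)\alpha)^{s}$ times polynomial factors'' silently folds that factor against the $e^{s}$ arising from $s!$ in Stirling; writing $c=d(d-1)\alpha$, the correct per-size-$r$ rate is $\bigl(c\,e^{1-c}\bigr)^{r/(d-1)}$, which is $<1$ exactly when $c\neq 1$, whereas a bound on all tree \emph{subgraphs} (no isolation term) would only decay for $c<1/e$ and would thus not reach the advertised threshold $\alpha<1/(d(d-1))$. The same care is needed in the complex-component sum: restrict first to $r=O(\log n)$ via the size bound, or retain the isolation term there too, so the sum over excess $\ell\ge 1$ and all $r$ stays $O(1/n)$.
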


Here the constant implicit in the $O(\log n)$ may depend on $\alpha$ and $d$,
and affects the $O(1)$ term in the final $\log \log n +O(1)$ bound on the wear.

We define an orientation on $H$ where
each hyperedge in $H$ is oriented to one of its $d$ cells, denoting where the 
associated item resides.  
Thus, such an orientation has at most one hyperedge oriented to
each vertex, since there is at most one item located in each cell.

Recall our modified cuckoo hashing insertion procedure, which
allows us to achieve our bound on wear.  
We assume that each cell
tracks its current wear (the number of times the cell value
has been rewritten) in a counter.  
On insertion for a time, $x$, 
if any of the $d$ cells for $x$ is empty,
then $x$ is placed in one of its empty cells (randomly chosen).  
Otherwise, $x$ replaces the item, $y$, 
in the cell with the smallest wear out of
its $d$ possible choices.  The replaced item, $y$, continues by replacing the
item in the cell with the smallest wear out of its $d$ choices, and so on,
until all items are placed.  Note that a pair of items, $x$ and $y$,
may repeatedly replace each other in a cell until the wear of the cell 
increases to the point when there is an alternative lower-wear cell for 
$x$ or $y$
(in practice we would not have perform the repeated replacements, but would 
update the wear variable accordingly).

For a cell of wear $k$, we define its \emph{wear-children} to be the $d-1$
other cells corresponding to the item contained in this cell.  
Our proof of the $O(\log \log n)$ bound on the 
maximum wear for our modified cuckoo hashing scheme depends, in part,
on the following three simple, but crucial, observations.

\begin{observation}
\label{obs:one}
A cell of wear $k$ has $d-1$ wear-children with wear at least $k-1$.  
\end{observation}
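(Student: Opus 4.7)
The plan is to read off the observation directly from the insertion rule. Fix a cell $c$ whose wear is $k\ge 1$, let $x$ be the item currently stored in $c$, and rewind to the most recent moment at which $c$ was written, which is precisely the moment $x$ was placed into $c$. At that instant the wear of $c$ rose from $k-1$ to $k$; no subsequent write has touched $c$, because otherwise $c$ would contain a different item. So, in what follows, $k-1$ is the wear of $c$ immediately before $x$'s placement.

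Next I would appeal to the insertion rule of the algorithm. Whether $x$ arrived at $c$ as the first step of a new insertion or as some intermediate link in a displacement chain, the rule is the same: $x$ is stored in whichever of its $d$ hash cells currently has the smallest wear, with ties broken arbitrarily. Consequently, at the moment of $x$'s placement, each of the $d-1$ wear-children of $c$ had wear at least $k-1$. Since wear counts are monotone non-decreasing over the life of the table, those $d-1$ cells still have wear at least $k-1$ now, which is exactly the statement of the observation.

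The only minor wrinkle, and the one point I would double-check explicitly, is the special case in the analysis where $x$'s initial insertion finds an empty cell and picks one uniformly at random among the empties rather than explicitly minimizing wear. In the no-deletion regime assumed in this section, an empty cell is exactly a cell of wear $0$, so empty cells are automatically minimum-wear cells and the ``place in a minimum-wear cell'' invariant is preserved. There is no deeper obstacle here: the observation is essentially a direct consequence of the placement rule together with the monotonicity of wear counts, and its role is to serve as the combinatorial seed for the cascading argument that drives the $\log\log n + O(1)$ bound in the subsequent lemmas.
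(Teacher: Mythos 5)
Your proof is correct and takes the same approach as the paper's one-sentence justification: a cell reaches wear $k$ precisely at the moment the item now stored there was placed, and the placement rule guarantees that all $d-1$ alternative cells had wear at least $k-1$ at that moment. You additionally spell out the monotonicity of wear counts and the empty-cell edge case explicitly, both of which are correct and which the paper leaves implicit.
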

This is because for a cell to obtain wear $k$ all the choice corresponding
to the item placed in that cell have to have wear at least $k-1$ at the time
of its placement.  

\begin{figure}[t]
\centering\includegraphics[width=\textwidth]{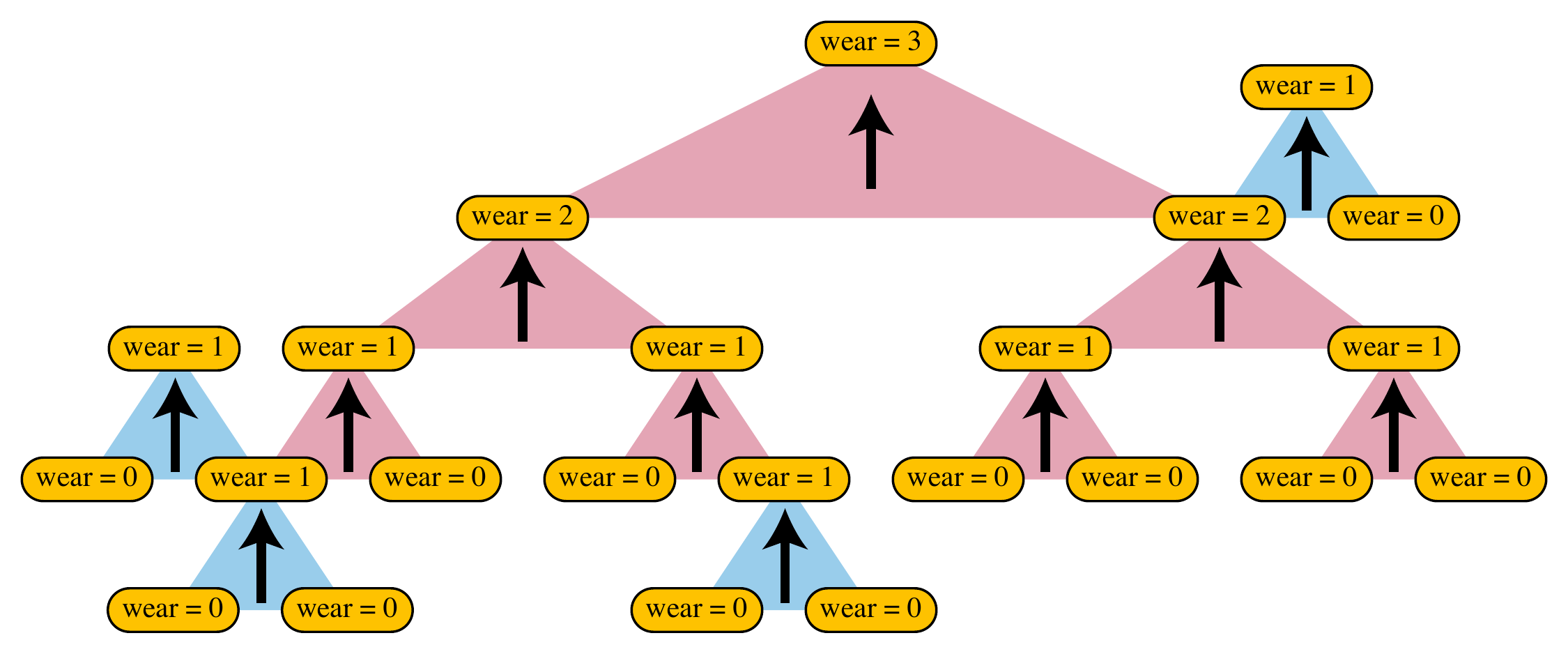}
\caption{A tree component in the hypergraph of cells (yellow ovals) and values (blue and pink triangles) for a cuckoo hash table with $d=3$. The arrows in the value triangles indicate the cell into which each value is placed. The complete binary tree associated with a cell of wear~$3$ is shown by the pink triangles. Some additional values and cells needed to achieve the depicted wear values are not shown.}
\label{fig:binary-tree}
\end{figure}

\begin{observation}
\label{obs:tree}
Consider any tree component in the hypergraph $H$ with a cell of wear $k$.  Then
the component contains a $(d-1)$-ary tree consisting of the cell's wear-children, and the
wear-children of those wear-children, and so on, of at least $(d-1)^k+1$ distinct nodes.  
\end{observation}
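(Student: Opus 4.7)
The plan is to establish the observation by induction on the depth $l$ of the wear-children recursion, where Observation~\ref{obs:one} drives the step and the tree structure of the component guarantees distinctness. Let $T_l$ denote the subgraph of $H$ spanned by all hyperedges used to recursively expand wear-children to depth $l$, starting from the wear-$k$ root cell $c_0$. At depth $0$, $T_0$ consists of just $c_0$. Inductively, suppose $T_l$ is a tree (in the sense of the paper's definition) whose $(d-1)^l$ level-$l$ vertices each have wear at least $k-l$. Since $l < k$, each such cell has wear at least $1$, so by Observation~\ref{obs:one} it is associated with a hyperedge oriented to it whose other $d-1$ endpoints are its wear-children, each of wear at least $k-l-1$.

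The heart of the argument is to show that extending $T_l$ to $T_{l+1}$ by adjoining, for every level-$l$ cell $c$, the hyperedge $e_c$ oriented to $c$ together with its $d-1$ other endpoints, introduces $(d-1)^{l+1}$ genuinely new vertices and keeps the subgraph a tree. I would verify three things in sequence. First, $e_c$ is not already in $T_l$: every edge of $T_l$ is oriented to some internal cell of $T_l$ other than $c$, and since each hyperedge has a unique orientation target it cannot coincide with $e_c$. Second, $e_c$ shares no vertex with $T_l$ other than $c$ itself: any additional shared vertex would close a cycle using the path from $c$ to that vertex inside $T_l$, contradicting the tree hypothesis on the component. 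Third, for two distinct level-$l$ cells $c$ and $c'$, the fresh edges $e_c$ and $e_{c'}$ share no vertex outside $T_l$: a common new vertex $v$, combined with the unique path from $c$ to $c'$ inside $T_l$ and the one-step hops $c \to v$ via $e_c$ and $v \to c'$ via $e_{c'}$, would form a Berge cycle in the component.

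Once distinctness is established, the counting is immediate: iterating the expansion from $l=0$ to $l=k$ produces, at depth $k$, exactly $(d-1)^k$ cells, all distinct from each other and from the root, for a total of at least $(d-1)^k + 1$ distinct nodes inside the $(d-1)$-ary wear-children tree, as required.

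The main obstacle is the distinctness argument in the second paragraph; the rest is bookkeeping on the branching factor guaranteed by Observation~\ref{obs:one}. In particular, the subtlety is that the wear-children recursion is purely \emph{local} (each step is determined by the hyperedge oriented to the current cell), so the global absence of cycles in the component is exactly what one needs to prevent two independent local descents from colliding at a shared vertex.
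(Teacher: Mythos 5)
Your proof is correct and takes essentially the same approach as the paper: iterate Observation~\ref{obs:one} to grow a $(d-1)$-ary tree of wear-children rooted at the wear-$k$ cell, and appeal to the acyclicity of the tree component to conclude that the nodes generated at different points of the recursion are distinct. The paper dispatches the distinctness claim in a single sentence (``the nodes must be distinct when the component is a tree''), whereas you make the induction explicit and isolate the three ways a collision could arise (a repeated hyperedge, a hyperedge re-entering $T_l$, and two sibling hyperedges meeting outside $T_l$), each refuted by the no-cycle property; this is a more careful rendering of the same idea rather than a different argument.
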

This is because, by Observation~\ref{obs:one},
a cell of wear $k$ has $d-1$  wear-children with wear at least
$k-1$, each of which has $d-1$ of its own children, and we can iterate 
this argument counting descendants down to wear 0.  
The nodes must be distinct when the component is a tree. 
Fig.~\ref{fig:binary-tree} depicts an example.

If we did not need to consider unicyclic components, we would be done
at this point. This is because, by
Lemma~\ref{lem:log-components},
component sizes are bounded by $O(\log n)$ with high probability, 
and, if all components are trees, then, by Observation~\ref{obs:tree}, 
if there is a cell of wear $k$, then
there is a component of size at least $(d-1)^k$. 
Thus, for $d \geq 3$,
the maximum wear would necessarily be $\log \log n +O(1)$ with high probability.
Our next observation, therefore, deals with unicyclic components.

\begin{observation}
\label{obs:unicycle}
Consider any unicyclic component in the hypergraph $H$ 
with a cell of wear $k$.  Then
the component contains a $(d-1)$-ary tree of $(d-1)^{k-1}+1$ distinct nodes.  
\end{observation}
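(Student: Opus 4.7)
The plan is to adapt the argument of Observation~\ref{obs:tree} while quarantining the effect of the unique cycle to one branch of the wear-child recursion. The key is to show that at least one of $c$'s wear-children roots a collision-free $(d-1)$-ary wear-descendants tree of depth $k-1$, from which the required $(d-1)^{k-1}+1$ distinct cells drop out.

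First I would invoke Observation~\ref{obs:one} to obtain $c$'s wear-children $w_1,\ldots,w_{d-1}$, each of wear at least $k-1$. Iterating Observation~\ref{obs:one} then provides, for each $w_i$, an abstract $(d-1)$-ary wear-descendants tree $T_i$ of depth $k-1$ whose depth-$(k-1)$ leaves correspond to cells of wear at least $0$. If the labels of $T_i$'s abstract nodes are pairwise distinct (I will call this \emph{collision-free}), then $T_i$'s root $w_i$ together with its $(d-1)^{k-1}$ depth-$(k-1)$ leaves supply $(d-1)^{k-1}+1$ distinct cells arranged as a $(d-1)$-ary tree, and the observation follows.

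The heart of the argument is showing that at least one of $T_1,\ldots,T_{d-1}$ is collision-free. A label collision inside $T_i$ means two distinct wear-child paths from $w_i$ meet at a common cell; concatenating these paths yields a closed walk in the component and hence exhibits a cycle. If two subtrees $T_i$ and $T_j$ both carried internal collisions, then the two witnessed closed walks would represent homologically independent cycles in the component's incidence graph, contradicting the assumption that the component has first-homology rank exactly one (which is what unicyclic, $(d-1)s=r$, amounts to). Consequently at most one $T_i$ fails to be collision-free, and since $d\ge 3$ at least one collision-free subtree exists, completing the plan.

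The main obstacle is justifying the independence of the two hypothetical witness cycles in the step above: the subtrees $T_i$ and $T_j$ can share cells in their deeper layers, so it is not instantly obvious that their witnessed closed walks span two independent homology classes rather than both being reroutings of the single cycle of the component. I would handle this by selecting, for each colliding subtree, a minimal closed wear-child walk through its root $w_i$ and verifying that two such walks rooted at distinct wear-children of $c$ cannot be homologous in a unicyclic hypergraph. Should this check prove delicate, a fallback strategy is to observe that $c$'s wear-children all share the single hyperedge placed at $c$, so the unique cycle of the component can enter the wear-descendants of at most one $w_i$ without producing a second independent cycle, and to reroute through any other wear-child.
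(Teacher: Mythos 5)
Your approach is genuinely different from the paper's, and while your core intuition is correct, the pivotal step is left unproven in a way that is less clean than the paper's route.

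The paper's proof is more direct: it runs the same wear-descendant expansion as in Observation~\ref{obs:tree} starting from the cell of wear $k$, and whenever the expansion reaches a cell already present, it prunes that branch. Because each pruning corresponds to a ``repeat'' edge that raises the cyclomatic number of the explored subgraph, and a unicyclic component has cyclomatic number exactly one, there can be at most one pruning. The worst case is that the pruned branch hangs off the root, cutting off one of the root's $d-1$ subtrees, which leaves at least $(d-1)^{k-1}+1$ distinct cells. That is the whole argument---no reasoning about which subtree survives is needed.

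You instead try to certify that one entire subtree $T_i$ (rooted at a wear-child $w_i$ of $c$) is collision-free, and then use only that subtree for the count. The advantage is a slightly tidier final count, but the cost is that you must prove the selection lemma ``at most one of the $T_i$ has an internal collision.'' Your proposed justification---that two colliding subtrees would yield two homologically independent cycles---is the right idea, but you correctly flag that you have not verified independence, and the fallback (``the unique cycle can enter the wear-descendants of at most one $w_i$'') is imprecise: the unique cycle can certainly intersect the images of several $T_i$ (for example, if two $w_i$ share a wear-child, the cycle passes through $c$'s hyperedge and both subtrees). What actually has to be shown is that any internal collision in $T_i$ produces a cycle supported entirely inside the image of $T_i$, and that if the images of two subtrees overlap at all then that overlap already produces a cycle through $e_c$, so a further internal collision would give a second independent cycle. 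This is provable, but it is an extra layer of case analysis that the paper's single-pruning argument sidesteps entirely. If you want to pursue your route, you should replace the homology hand-waving with the explicit cyclomatic-number bookkeeping sketched above; otherwise the paper's prune-once argument is the more economical proof.
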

\begin{proof}
We follow the same argument as for a tree component, in which we build a complete $(d-1)$-ary tree from the given cell. Because the component is not acyclic, this process might find more than one path to the same cell. If this should ever happen, we keep only the first instance of that cell, and prune the tree at the second instance of the same cell (preventing it from being a complete tree). This pruning step breaks the only cycle in a unicyclic component, so there is only one such pruning and the result is a tree that is complete except for one missing branch. The worst place for the pruning step to occur is nearest the root of the tree tree, in which case at most a $1/(d-1)$ fraction of the tree is cut off.    
\end{proof}

It is clear that unicyclic components do not change our conclusion that the maximum
wear is $\log \log n +O(1)$ with high probability.

\begin{theorem}
If our modified cuckoo hashing algorithm performs a sequence of $n$ insertions, with $d\ge 3$, on a table whose size is a sufficiently large multiple of $n$, then with high probability the wear will be $\log\log n+O(1)$.
\end{theorem}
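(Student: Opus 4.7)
The plan is to combine the hypergraph component-structure lemma with Observations~\ref{obs:tree} and \ref{obs:unicycle} to turn any high-wear cell into a low-probability event, and then take a union bound.

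First, I would fix the table size to be $Cn$ with $1/C < 1/(d(d-1))$, so that Lemma~\ref{lem:log-components} applies to $H$: with probability $1 - O(1/n)$, every connected component of $H$ is either a tree or unicyclic and has size at most $c\log n$ for some constant $c = c(d,C)$. Call this favorable event $\mathcal{E}$. All further reasoning takes place on $\mathcal{E}$.

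Second, I would argue deterministically on $\mathcal{E}$ that the wear of every cell is bounded by $\log\log n + O(1)$. Consider any cell $v$ whose final wear is $k$, and let $K$ be the component of $H$ containing $v$. If $K$ is a tree, Observation~\ref{obs:tree} produces a $(d-1)$-ary subtree rooted at $v$ with at least $(d-1)^k + 1$ distinct nodes; if $K$ is unicyclic, Observation~\ref{obs:unicycle} still gives at least $(d-1)^{k-1} + 1$ distinct nodes. Taking the worse bound in either case and using $|K| \leq c\log n$ gives $(d-1)^{k-1} \leq c\log n$, which for $d \geq 3$ rearranges to $k \leq \log_{d-1}(c\log n) + 1 = \log\log n + O(1)$, where the $O(1)$ absorbs both $\log_{d-1} c$ and the change of logarithm base. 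Since $\mathcal{E}$ occurs with probability $1 - O(1/n)$, this yields the claimed high-probability bound.

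The main thing to get right, and the only place I would anticipate pushback, is the legitimacy of applying Observations~\ref{obs:one}--\ref{obs:unicycle} to the \emph{final} state of the table: one has to check that the key invariant behind Observation~\ref{obs:one}, namely that a cell of wear $k$ was last written by an item all of whose $d$ choices had wear at least $k-1$ at that moment, genuinely survives to the end because wear counts only increase. Given that, the rest is a short deterministic argument. A secondary technical point is that $H$ really is the uniform random $d$-uniform hypergraph required by Lemma~\ref{lem:log-components}: this holds because the $d$ hash values of each item are drawn independently of the algorithm's placement decisions, so the component structure is independent of the insertion dynamics and the lemma applies as stated.
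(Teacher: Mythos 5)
Your proposal is correct and follows essentially the same route as the paper's own proof: invoke Lemma~\ref{lem:log-components} to get $O(\log n)$-size tree or unicyclic components with high probability, then use Observations~\ref{obs:tree} and~\ref{obs:unicycle} to embed a $(d-1)^{k-1}$-node subtree at any cell of wear $k$, forcing $k\le\log\log n+O(1)$. The two technical points you flag (monotonicity of wear counts making the observations hold in the final state, and the hypergraph distribution being independent of the placement dynamics) are valid checks that the paper leaves implicit but do hold.
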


\begin{proof}
We form a $d$-regular hypergraph whose vertices are the cells of the hash table, and whose hyperedges record the sets of cells associated with each item.
By Lemma~\ref{lem:log-components}, with high probability, all components of this graph are trees or unicyclic components, with $O(\log n)$ vertices. By Observations~\ref{obs:tree} and~\ref{obs:unicycle}, any cell with wear $k$ in one of these components has associated with it a binary tree with $\Omega(2^k)$ vertices. In order to satisfy $2^k=O(\log n)$ (the binary tree cannot be larger than the component containing it) we must have $k\le \log\log n+O(1)$.
\end{proof}

It would be of interest to extend this analysis to sequences of operations that mix insertions with deletions, but our current methods handle only insertions.

\section{Experiments}

We have implemented and tested three hashing algorithms:
\begin{itemize}
  \item our variant of cuckoo hashing with $d=3$ hash functions
  \item standard cuckoo hashing with $d=3$ hash functions\footnote{%
	Using standard cuckoo hashing with $d=2$ proved counterproductive,
	as multiple \emph{failures} (i.e., situations where we are unable
	to insert new item into the table) were
	observed. This is due to the failure probability being non-negligible,
	of the order $\Theta(1/n)$, in this version of cuckoo hashing. 
        Of course, such failures can be circumvented by using a 
	\emph{stash}~\cite{DBLP:journals/siamcomp/KirschMW09}
	for storing items that failed to be inserted, 
	but a stash necessarily has to be 
	outside of the wear-vulnerable memory, 
	or it has to be moved a lot,
	since it will have many rewrites. Thus, we did not include comparisons
	to cuckoo hashing with two hash functions.}
  \item standard open-address hashing with linear 
probing and eager deletion (explained below)~\cite{Sed-03-linear-probing}.
\end{itemize}
We ran a series of tests to gauge the behavior of maximum wear for the above
three algorithms. 
In all cases, the setup was the same: we start with an initially
empty hash table of capacity 30 million items, 
then perform a number of insertions,
until desired usage ratio (fill ratio) is achieved 
(we tested usage ratios 1/6, 1/3, 1/2, 2/3 and 4/5).
Then, for the main part of the test, we perform 1 billion ($10^9$) operation
pairs, where a single operation pair consists of
\begin{enumerate}
  \item deleting a randomly selected item that is present in the hash table
  \item inserting a new item into the hash table.
\end{enumerate}
This way, once the desired usage ratio is achieved in the first phase, it is
kept constant throughout the rest of the test.
For the sequence of insertions, we simply used integers in the natural order
(i.e., 0, 1, 2, $\ldots$), since the hash functions are random. 
Different test runs with the same input were
parametrized by the use of different hash functions in the algorithm.
We ran about 13 tests for each 
(\emph{usage ratio}, \emph{algorithm}) combination.
Our tests were implemented in \verb\C++\. We used various cryptographic hash
functions provided by the \verb\mhash\ library~\cite{mhash}.

As discussed above,
we store wear information for each cell in the hash table. 
Each time a new value
is written into the cell, we increase its wear count. In the case of linear
probing, when an item, $a$, is erased, all subsequent items in the same
probe sequence (until an empty cell is encountered) need to be rehashed. This is
implemented as erasing an item and inserting it again using the insertion
algorithm. It is easy to implement this in a way such that if a rehashed item
ends up in its original position, no physical write is necessary. Therefore
we did not count this case as a wear increase.

In addition to the above scenario of mixed insertions and deletions,
we also measured the max wear after
an insertion of 20 million items into a hash
table of capacity 30 million, with no deletions.
The results are shown in Fig.~\ref{fig_insert}.

\begin{figure}
\vspace*{-0.3in}
\begin{center}
  \includegraphics[width=0.9\textwidth,height=0.35\textheight]{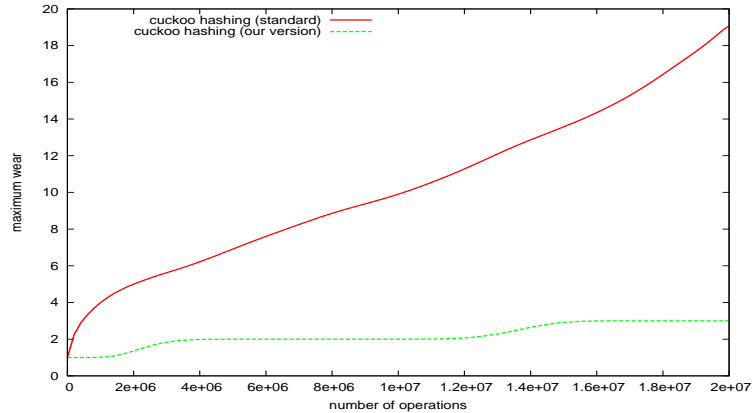}
\end{center}
\vspace*{-0.4in}
  \caption[cc7]{Average maximum wear during initial insertion 
    sequence (20 million insertions
    into hash table of capacity 30 million), with no deletions.
    Linear probing is not shown, as
    it has maximum wear equal to 1 as long as insertions are 
    the only operations involved.}
  \label{fig_insert}
\end{figure}

Even though our theoretical analysis is only valid for sequences of insertions,
our algorithm behaves equally well when insertions are interspersed
with deletions, as explained in the description of the test setting. 
The results for different usage ratios are shown in 
Figures~\ref{fig_1_6}, \ref{fig_1_3}, \ref{fig_1_2}, \ref{fig_2_3} and~\ref{fig_4_5}.
Table~\ref{avg_wear_table} contains average wear (calculated over all cells)
when the test has concluded.

\begin{figure}
\vspace*{-0.2in}
\begin{center}
  \includegraphics[width=0.9\textwidth,height=0.35\textheight]{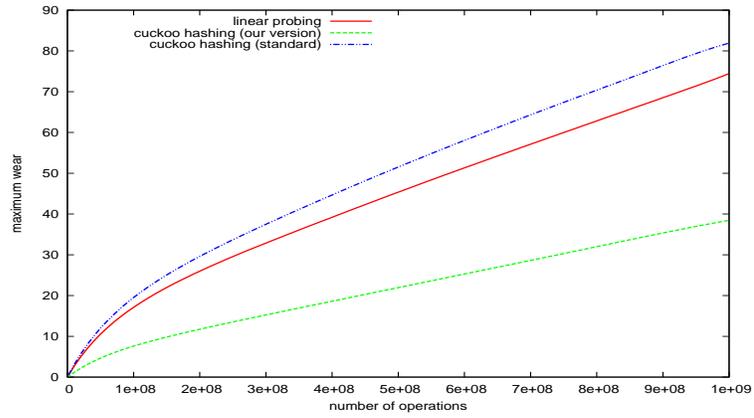}
\end{center}
\vspace*{-0.5in}
  \caption{Maximum wear for usage ratio $\frac{1}{6}$.}
  \label{fig_1_6}
\end{figure}

\begin{figure}
\vspace*{-0.2in}
\begin{center}
  \includegraphics[width=0.9\textwidth,height=0.35\textheight]{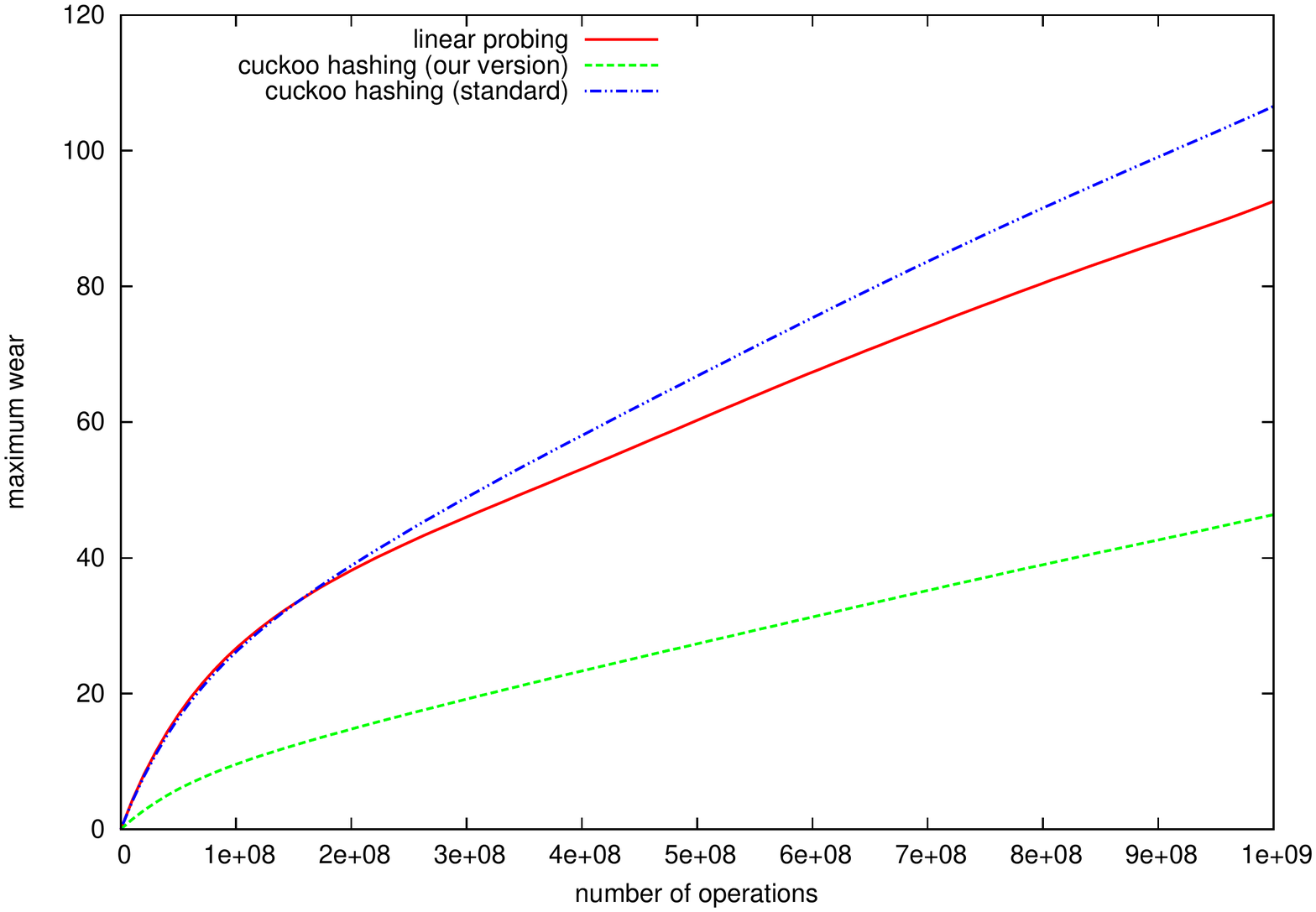}
\end{center}
\vspace*{-0.5in}
  \caption{Maximum wear for usage ratio $\frac{1}{3}$.}
  \label{fig_1_3}
\end{figure}

\begin{figure}
\vspace*{-0.2in}
\begin{center}
  \includegraphics[width=0.9\textwidth,height=0.35\textheight]{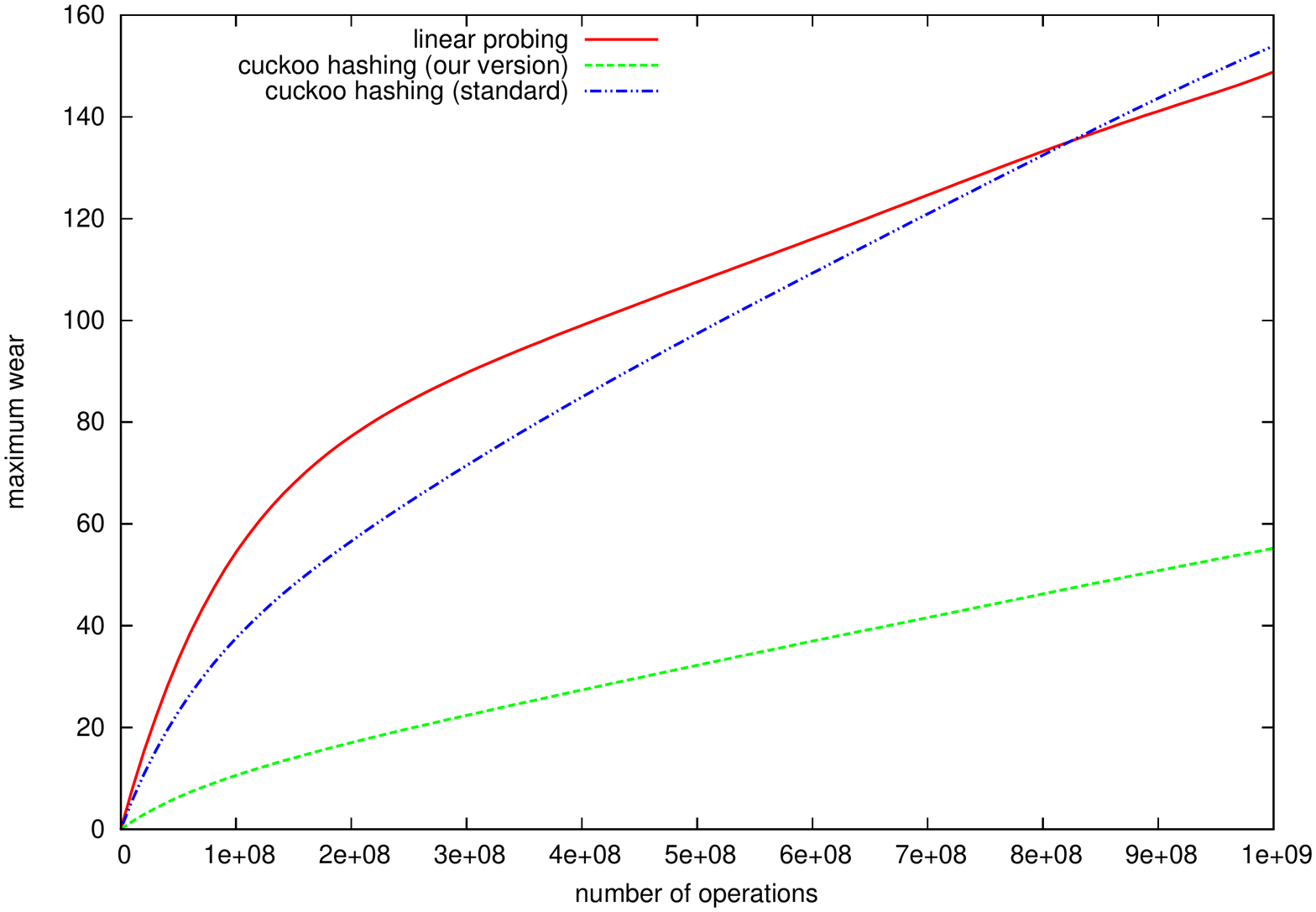}
\end{center}
\vspace*{-0.5in}
  \caption{Maximum wear for usage ratio $\frac{1}{2}$.}
  \label{fig_1_2}
\end{figure}

\begin{figure}
\vspace*{-0.2in}
\begin{center}
  \includegraphics[width=0.9\textwidth,height=0.35\textheight]{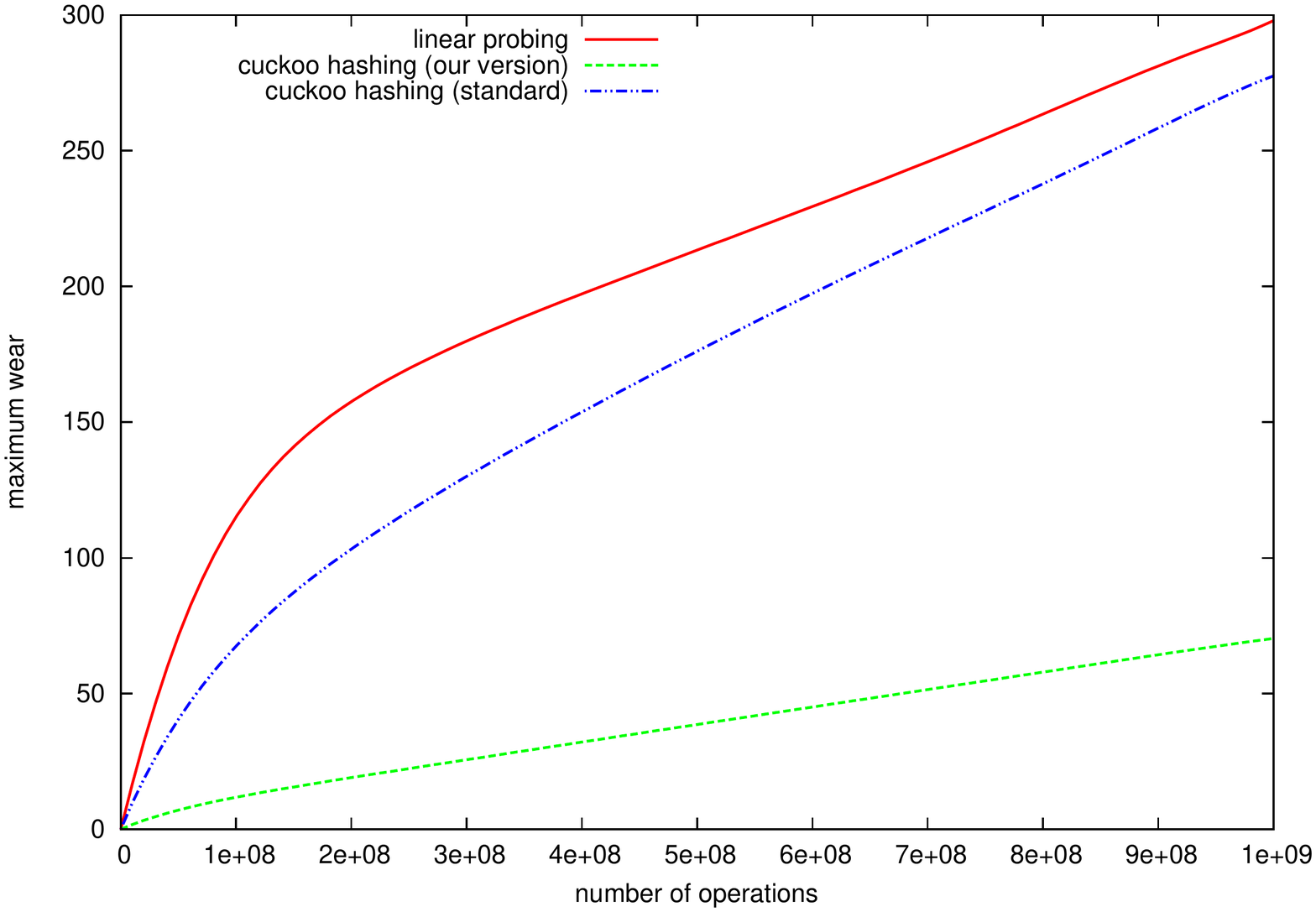}
\end{center}
\vspace*{-0.5in}
  \caption{Maximum wear for usage ratio $\frac{2}{3}$.}
  \label{fig_2_3}
\end{figure}

\begin{figure}
\vspace*{-0.2in}
\begin{center}
  \includegraphics[width=0.9\textwidth,height=0.35\textheight]{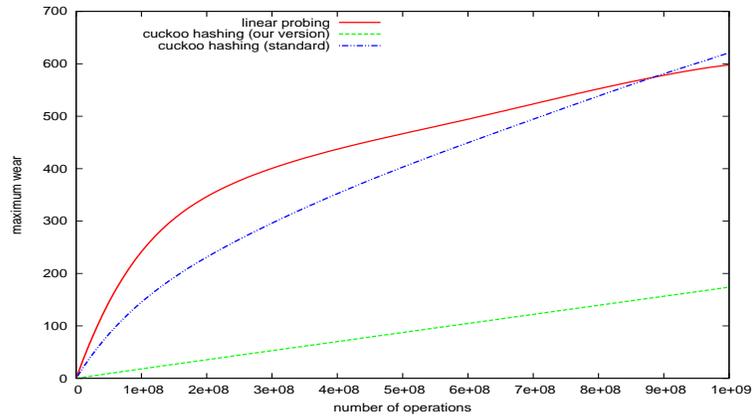}
\end{center}
\vspace*{-0.5in}
  \caption{Maximum wear for usage ratio $\frac{4}{5}$.}
  \label{fig_4_5}
\end{figure}

\begin{table}
  \begin{center}
    \begin{tabular}{|c|c|c|c|}
      \hline Usage ratio & Cuckoo hashing (ours) & Cuckoo hashing (standard) & Linear probing \\
      \hline 1/6 & 33.92 & 40.52 & 37.12 \\
      \hline 1/3 & 36.57 & 52.33 & 44.41 \\
      \hline 1/2 & 44.68 & 74.51 & 58.96 \\
      \hline 2/3 & 64.52 & 128.91 & 90.19\\
      \hline 4/5 & 171.93 & 281.84 & 148.29\\
      \hline
    \end{tabular}
  \end{center}
  \caption{Average wear after sequence of operations.}
  \label{avg_wear_table}
\end{table}

\subsection{Discussion} 
When the only operations performed are insertions,
linear probing achieves optimal wear (wear${}=1$). 
However, when a mixed sequence
of insertions and deletions is performed, our algorithm is clearly the best
among those tested in terms of minimizing maximum wear.
In this situation, linear probing and standard cuckoo hashing
behave in a similar way, while the maximum wear is much smaller 
for our algorithm.
It is evident that the difference grows as the hash table becomes more filled
(its usage ratio becomes higher).
It is also worth noticing that minimizing maximum wear also results in smaller
overall wear of the hash table, as shown in Table~\ref{avg_wear_table} (exept
for the case of wear${}=4/5$, where linear probing achieves smaller overall wear,
but at the cost of unacceptable slowdown in access time).

\subsection*{Acknowledgements}
This research was supported in part by NSF grants 1011840, 1228639, and
ONR grant N00014-08-1-1015.  Michael Mitzenmacher was supported in part
by NSF grants CCF-1320231, CNS-1228598, IIS-0964473, and CCF-0915922, and
part of this work was done while visiting Microsoft Research New England.

\raggedright
\bibliographystyle{abuser}
\bibliography{refs}

\end{document}